\newtheoremstyle{mythm}{3pt}{3pt}{}{16pt}{\bfseries}{:}{.5em}{}
\theoremstyle{mythm}
\newtheorem{theorem}{Theorem}
\newtheorem{example}{Example}
\newtheorem{definition}{Definition}
\newtheorem{lemma}{Lemma}
\newtheorem{construction}{Construction}
\begin{document}
\title{A Novel Recursive Construction for Coded Caching Schemes
\author{Minquan Cheng, Jing Jiang, Youzhi Yao
}
\thanks{M. Cheng, J. Jiang and Y. Yao are with Guangxi Key Lab of Multi-source Information Mining $\&$ Security, Guangxi Normal University,
Guilin 541004, China (e-mail: $\{$chengqinshi,jjiang2008,yaoyzhi$\}$@hotmail.com).}
}
\date{}
\maketitle

\begin{abstract}
As a strategy to further reduce the transmission pressure during the peak traffic times in wireless network, coded caching has been widely studied recently. And several coded caching schemes are constructed focusing on the two core problems in practice, i.e., the rate transmitted during the peak traffic times and the packet number of each file divided during the off peak traffic times. It is well known that there exits a tradeoff between the rate and the packet number. In this paper, a novel recursive construction is proposed. As an application, several new schemes are obtained. Comparing with previously known schemes, new schemes could further reduce packet number by increasing little rate. And for some parameters in coded caching systems, the packet number of our new schemes are smaller than that of schemes generated by memory sharing method which is widely used in the field of caching. By the way our new schemes include all the results constructed by Tang et al., (IEEE ISIT, 2790-2794, 2017) as special cases.
\end{abstract}

\section{Introduction}
As an efficient solution to reduce tremendous pressure on the data transmission during the peak traffic times, caching has been widely studied \cite{AS,BGW,BG,DSS,GMDC,GGMG,JCM,KNMD,KPR} in heterogeneous wireless networks.
It has also been recognized as a disruptive technology to overcome the upcoming tremendous growth of wireless  traffics for the next 5th generation (5G) cellular networks \cite{BHLMP,WCTKL}. The basic idea is simple. During the off peak traffic times, some contents are proactively placed into the user's memory. Clearly if the content is required by user during the peak traffic times, then traffic amount can be reduced. So traditionally, most studies of caching system focused on exploiting the history or statistics of the user demands for an appropriate caching strategy, such as \cite{AA,BGW,BRS,DF,DSS,KPR,MMP}.

However, content requests are unknown to the server during the off peak traffic times.
Furthermore, the authors in \cite{STD} pointed that even for a known, fixed popularity distribution, deciding what to cache is an NP-hard combinatorial optimization problem.
Surprisingly even the contents which are required were not cached, we can also reduce the traffic amount by creating broadcast coding opportunities where the central server transmits the XOR of two files and each user uses their cache to cancel the non-desired file. This method is called coded caching first proposed by Maddah-Ali and Niesen in \cite{MN}. In \cite{MN}, the following caching scenario is focused: a single server containing $N$ files with the  same length connects to $K$ users over a shared link and each user has a cache memory of size $M$ files. An $F$-division coded caching scheme consists of two separate phases: placement phase and delivery phase.
In the placement phase, each file is divided into $F$ equal packets, and each user caches some packets of each file elaborately from server. This phase does not depend on the user demands
which are assumed to be arbitrary. In delivery phase, each user requires a file from server firstly. Then according to each user's cache, server sends a coded signal (XOR of some required packets) with size at most $R$ files to the users such that various user demands are satisfied. Here $R$ is always called the rate of the scheme.

The first $F$-division coded caching scheme was constructed by Maddah-Ali and Niesen in \cite{MN}. Such a  scheme is called MN scheme in this paper. In fact the rate of MN scheme is at most four times larger than the  information-theoretic lower bound on rate in \cite{GR}.
However $F$ in MN scheme increases exponentially with the number of users $K$. This would become infeasible when $K$ is large. Then many studies focus on designing caching schemes that decrease the order of $F$ for practical application. For example, a combinatorial structure, $(K,F,Z,S)$ placement delivery array (PDA) proposed in \cite{YCTC}, can be used to realize an $F$-division $(K,M,N)$ coded caching scheme where $M/N=Z/F$ and $R=S/F$. They also proved that MN scheme is equivalent to a special PDA which is denoted by MN PDA. There are other viewpoints of characterizing coded caching schemes, such as hypergraphs \cite{SZG}, resolvable designs and cyclic codes \cite{TR}, bipartite graphs \cite{YTCC}, Ruzsa-Szem\'{e}redi graphs \cite{STD} and so on. However \cite{SZG} showed that all the constructions in
\cite{SZG,TR,YTCC,STD} can be represented by PDAs. And the authors in \cite{CJYT} generalized all the constructions in \cite{YCTC} and most main results in \cite{SZG} by means of PDAs.

In this paper, we will propose an novel recursive construction of PDAs. Consequently several new schemes are obtained. By performance analyses, our new schemes can further reduce packet number by increase little rate comparing with the previously known results. And for some fixed $K$, $M/N$ and $R$, $F$ in our new schemes is smaller than that of schemes generated by memory sharing method. In addition, our results include all the results in \cite{TR} as a special case.

The rest of this paper is organized as follows. Section \ref{preliminaries} briefly reviews the relationship between PDA and coded caching scheme. In Section \ref{construction}, a recursive construction is proposed. In Section \ref{sec-application}, several classes of PDAs are obtained by using our recursive construction based on previously known PDAs. And some comparisons are considered. Conclusion is drawn in Section \ref{conclusion}.
\section{Preliminaries}\label{preliminaries}
\subsection{Placement Delivery Array}
\begin{definition}(\textit{Placement Delivery Array}, \cite{YCTC})
For  positive integers $K,F, Z$ and $S$, an $F\times K$ array  $\mathbf{P}=(p_{j,k})$, $0\leq j< F, 0\leq k< K$, composed of a specific symbol $``*"$  and $S$ integers
$0,1,\cdots, S-1$, is called a $(K,F,Z,S)$ placement delivery array (PDA) if it satisfies the following conditions:
\begin{enumerate}
  \item [C$1$.] The symbol $``*"$ appears $Z$ times in each column;
  \item [C2.] Each integer occurs at least once in the array;
  \item [C$3$.] For any two distinct entries $p_{j_1,k_1}$ and $p_{j_2,k_2}$,    $p_{j_1,k_1}=p_{j_2,k_2}=s$ is an integer  only if
  \begin{enumerate}
     \item [a.] $j_1\ne j_2$, $k_1\ne k_2$, i.e., they lie in distinct rows and distinct columns; and
     \item [b.] $p_{j_1,k_2}=p_{j_2,k_1}=*$, i.e., the corresponding $2\times 2$  subarray formed by rows $j_1,j_2$ and columns $k_1,k_2$ must be of the following form
  \begin{eqnarray*}
    \left(\begin{array}{cc}
      s & *\\
      * & s
    \end{array}\right)~\textrm{or}~
    \left(\begin{array}{cc}
      * & s\\
      s & *
    \end{array}\right).
  \end{eqnarray*}
   \end{enumerate}
\end{enumerate}
\end{definition}

Yan et al, in \cite{YCTC} showed that a $(K,F,Z,S)$ PDA $\mathbf{P}$ can be used to realize an $F$-division $(K,M,N)$ caching scheme with $M/N=Z/F$ and $R=S/F$.
\begin{theorem}(\cite{YCTC})
\label{th-Fundamental}An $F$-division caching scheme for a $(K,M,N)$ caching system can be realized by a $(K,F,Z,S)$ PDA  with $Z/F=M/N$. Each user can decode his requested file correctly for any request ${\bf d}$ at the rate $R=S/F$.
\end{theorem}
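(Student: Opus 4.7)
The plan is to show the theorem constructively: given a $(K,F,Z,S)$ PDA $\mathbf{P}=(p_{j,k})$, I would exhibit an explicit placement and delivery procedure, then verify memory usage, rate, and decodability.

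For the placement phase, I would index the $F$ packets of each file $W_n$ by the rows of $\mathbf{P}$, writing $W_n=\{W_{n,j}:0\le j<F\}$. Then I would use the columns of $\mathbf{P}$ to label users: for each $k$, user $k$ caches the set $\mathcal{Z}_k=\{W_{n,j}:0\le n<N,\,p_{j,k}=*\}$. Condition C1 (each column contains exactly $Z$ stars) immediately gives that each user stores $NZ$ packets, i.e., $NZ/F=M$ files' worth of data, matching the memory budget via $Z/F=M/N$. Note this placement is oblivious to demand.

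For the delivery phase, after the request vector $\mathbf{d}=(d_0,\ldots,d_{K-1})$ is revealed, I would have the server transmit, for every integer symbol $s\in\{0,1,\ldots,S-1\}$ appearing in $\mathbf{P}$, the signal
\begin{equation*}
X_s \;=\; \bigoplus_{(j,k)\,:\,p_{j,k}=s} W_{d_k,j}.
\end{equation*}
By C2 each of the $S$ symbols contributes exactly one packet-sized transmission, so the total load is $S/F$ files, giving $R=S/F$ as claimed.

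The substantive step—and the main obstacle—is verifying that every user can recover every packet of its requested file. Fix user $k$ and an index $j$. If $p_{j,k}=*$ then $W_{d_k,j}\in\mathcal{Z}_k$ by construction, so nothing is needed. Otherwise $p_{j,k}=s$ is an integer and $W_{d_k,j}$ appears inside $X_s$. I would argue user $k$ can cancel every other term in $X_s$: each such term has the form $W_{d_{k'},j'}$ with $p_{j',k'}=s$ and $(j',k')\ne(j,k)$, and by clause (a) of C3 both $j'\ne j$ and $k'\ne k$, while clause (b) forces $p_{j',k}=*$. The latter is precisely the condition that $W_{d_{k'},j'}\in\mathcal{Z}_k$, so user $k$ already has every interfering packet cached and can XOR them out of $X_s$ to recover $W_{d_k,j}$. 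This completes the argument; the whole proof rests on reading the three PDA axioms as, respectively, the memory budget, the count of transmissions, and the interference-cancellation guarantee that makes decoding possible.
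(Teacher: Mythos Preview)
Your argument is correct and is exactly the standard construction: the placement rule $\mathcal{Z}_k=\{W_{n,j}:p_{j,k}=*\}$ and the delivery rule $X_s=\bigoplus_{p_{j,k}=s}W_{d_k,j}$ are precisely what the paper uses (see Example~\ref{exam2}), and your use of C1, C2, C3 to certify memory, rate, and decodability is the intended reading of the PDA axioms. Note, however, that the paper does not actually prove Theorem~\ref{th-Fundamental}; it is quoted from \cite{YCTC} and only illustrated through Example~\ref{exam2}, so there is no in-paper proof to compare against beyond that illustration, which your construction matches verbatim.
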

\begin{example}\label{exam2}
It is easy to verify that the following array is a $(4,6,3,4)$ PDA.
\begin{eqnarray*}
\mathbf{P}_{6\times 4}=\left(\begin{array}{cccc}
*&*&0&1\\
*&0&*&2\\
*&1&2&*\\
0&*&*&3\\
1&*&3&*\\
2&3&*&*
\end{array}\right).
\end{eqnarray*}
Then one can obtain a $6$-division $(4,3,6)$ coded caching scheme in the following way.
\begin{itemize}
   \item \textbf{Placement Phase}: First let $W_i$, $1 \leq i \leq 6$, be the $6$ files, and each file is divided into $F=6$ packets with equal size, i.e.,
   $$W_i=\{W_{i,0},W_{i,1},W_{i,2},W_{i,3},W_{i,4},W_{i,5}\},\ \ \ i\in [0,6).$$
  According to the positions of the stars in each column of $\mathbf{P}_{6\times 4}$, the contents cached in each users are defined as follows.
       \begin{align*}
       \mathcal{Z}_k=\left\{W_{i,j}\ |\ p_{j,k}=*, j\in [0,6), i\in[0,6)\right\},\ \  k\in [0,4)
       \end{align*}
  That is
       \begin{align*}
       \mathcal{Z}_0=\left\{W_{i,0},W_{i,1},W_{i,2}:i\in[0,6)\right\},\ \ \ \ \ \ \
       \mathcal{Z}_1=\left\{W_{i,0},W_{i,3},W_{i,4}:i\in[0,6)\right\}, \\
       \mathcal{Z}_2=\left\{W_{i,1},W_{i,3},W_{i,5}:i\in[0,6)\right\},\ \ \ \ \ \ \
       \mathcal{Z}_3=\left\{W_{i,2},W_{i,4},W_{i,5}:i\in[0,6)\right\}.
       \end{align*}
   \item \textbf{Delivery Phase}: Assume the $k$-th user requires the $k$-th file. For each integer $s\in [0,6)$, the server transmits the following coded signal in the $s$-th time.
       $$\oplus_{p_{j,k}=s, j\in[0,6),k\in[0,4)}W_{k,j}$$
  Table \ref{table1} lists all the coded signals.
   \begin{table}[!htp]
  \normalsize{
  \begin{tabular}{|c|c|}
\hline
    Time Slot& Transmitted Signnal  \\
\hline
   $0$&$W_{0,3}\oplus W_{1,1}\oplus W_{2,0}$\\ \hline
   $1$&$W_{0,4}\oplus W_{1,2}\oplus W_{3,0}$\\ \hline
  $2$& $W_{0,5}\oplus W_{2,2}\oplus W_{3,1}$\\ \hline
   $3$& $W_{1,5}\oplus W_{2,4}\oplus W_{3,3}$\\ \hline
% \bottomrule
  \end{tabular}}\centering
  \caption{Delivery steps in Example \ref{exam2} }\label{table1}
\end{table}
\end{itemize}
\end{example}
From Theorem \ref{th-Fundamental} and Example \ref{exam2}, we can obtain some coded caching schemes by constructing appropriate PDAs.

\subsection{Known results}
Here we list some previously known PDAs which have low $R$ and different level of packet numbers. For the other results the interested reader could be referred to \cite{CJYTC,SZG,YTCC,STD}.
\begin{lemma}(MN PDA\cite{MN})
\label{le-MN}
For any positive integers $K$ and $t$ with $t<K$, there exists a $(K,{K\choose t}, {K-1\choose t-1}, {K\choose t+1})$ PDA with $M/N=t/K$ and $R=\frac{K-t}{1+t}$.
\end{lemma}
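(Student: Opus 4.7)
The plan is to give an explicit construction of the array and verify the three PDA conditions directly. Label the rows of an $F\times K$ array by the $t$-subsets of $[K]:=\{0,1,\ldots,K-1\}$, so $F=\binom{K}{t}$, and label the columns by elements of $[K]$. Use the $(t+1)$-subsets of $[K]$ as the integer alphabet, giving $S=\binom{K}{t+1}$. For a row indexed by a $t$-subset $T$ and a column indexed by $k\in[K]$, set $p_{T,k}=*$ if $k\in T$, and otherwise let $p_{T,k}$ be the integer label of the $(t+1)$-subset $T\cup\{k\}$.

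First I would verify C1: in column $k$, a star appears in row $T$ precisely when $k\in T$, and the number of $t$-subsets of $[K]$ containing $k$ is $\binom{K-1}{t-1}$, giving $Z=\binom{K-1}{t-1}$. This also yields $M/N=Z/F=t/K$. Next I would verify C2: for any $(t+1)$-subset $S$ and any $k\in S$, the entry at row $T=S\setminus\{k\}$, column $k$ equals $S$, so every integer appears (in fact exactly $t+1$ times), and $R=S/F=\binom{K}{t+1}/\binom{K}{t}=(K-t)/(t+1)$.

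The core step is C3. Suppose $p_{T_1,k_1}=p_{T_2,k_2}$ is the integer corresponding to the same $(t+1)$-subset $S$, with $(T_1,k_1)\neq(T_2,k_2)$. By construction $T_1\cup\{k_1\}=T_2\cup\{k_2\}=S$ with $k_i\notin T_i$. If $k_1=k_2$ then $T_1=S\setminus\{k_1\}=S\setminus\{k_2\}=T_2$, contradicting the assumption; hence $k_1\neq k_2$ and consequently $T_1\neq T_2$. Since $k_2\in S$ and $k_2\neq k_1$, we get $k_2\in T_1$, so $p_{T_1,k_2}=*$; symmetrically $p_{T_2,k_1}=*$. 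This produces exactly the $2\times 2$ pattern required by C3.

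The construction itself is standard and essentially forced by the MN placement/delivery description; the main (minor) obstacle is simply a clean bookkeeping argument that the integer labels extracted from the two-way equality $T_1\cup\{k_1\}=T_2\cup\{k_2\}$ force $k_1,k_2$ to sit in the opposite row-indices, which is what drives the star pattern in C3. Everything else reduces to counting $t$- and $(t+1)$-subsets of $[K]$.
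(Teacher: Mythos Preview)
Your construction and verification are correct; this is exactly the standard realization of the Maddah--Ali--Niesen scheme as a PDA. The paper itself does not prove this lemma: it is simply quoted as a known result (attributed to \cite{MN}, with the PDA reformulation coming from \cite{YCTC}), so there is no in-paper argument to compare against. Your write-up matches the usual proof one finds in the literature, and the bookkeeping in your C3 step is clean and complete.
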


\begin{lemma}(\cite{CJYT})\label{le-cheng2}
For any positive integers $q$, $z$ and $m$ with $q\geq2$ and $z<q$, there exists an $((m+1)q$, $\lfloor\frac{q-1}{q-z}\rfloor q^{m}$, $z\lfloor\frac{q-1}{q-z}\rfloor q^{m-1}$, $(q-z)q^{m})$ PDA with $M/N=\frac{z}{q}$ and $R=(q-z)/\lfloor\frac{q-1}{q-z}\rfloor$.
\end{lemma}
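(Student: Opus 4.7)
The plan is to construct the claimed PDA explicitly and then verify its three defining conditions.

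First, I would set up the indexing so that the numerical parameters become transparent: label the $K=(m+1)q$ columns by pairs $(i,a)$ with $i\in\{0,\dots,m\}$ and $a\in\mathbb{Z}_q$, so that the users split into $m+1$ blocks of size $q$; label the $F=\lfloor(q-1)/(q-z)\rfloor q^m$ rows by pairs $(\mathbf{b},t)$ with $\mathbf{b}\in\mathbb{Z}_q^m$ and $t\in\{0,\dots,\lfloor(q-1)/(q-z)\rfloor-1\}$. This layout is what the product structure of $F$ suggests, and it makes subsequent counting transparent.

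For the star pattern I would, for each row $(\mathbf{b},t)$ and each block index $i$, specify a size-$z$ subset $A_{i,\mathbf{b},t}\subseteq\mathbb{Z}_q$ and place a star at $((\mathbf{b},t),(i,a))$ precisely when $a\in A_{i,\mathbf{b},t}$. A double count then gives $z(m+1)$ stars per row and, reorganized across the $(m+1)q$ columns, $Z=z\lfloor(q-1)/(q-z)\rfloor q^{m-1}$ stars per column, which verifies C1 and matches $M/N=z/q$. On the remaining $(m+1)(q-z)$ entries of each row I would place integers via a labeling $\phi$ valued in $\{0,\dots,(q-z)q^m-1\}$; a count of non-star cells as $F(m+1)(q-z)$ divided by $S=(q-z)q^m$ forces each integer to appear $(m+1)\lfloor(q-1)/(q-z)\rfloor$ times, which takes care of C2 once $\phi$ is chosen to hit every value.

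The real work is in condition C3: any two cells carrying the same integer must lie in different rows and columns, and the other two corners of the induced $2\times 2$ subarray must both be stars. Concretely, if $\phi((\mathbf{b},t),(i,a))=\phi((\mathbf{b}',t'),(i',a'))$, then I need $a'\in A_{i',\mathbf{b},t}$ and $a\in A_{i,\mathbf{b}',t'}$. I expect the verification to split on whether $i=i'$ or $i\neq i'$, with the vectors $\mathbf{b},\mathbf{b}'$ encoding which coordinates differ between the two users, and with $t$ controlling how many disjoint ``rounds'' can share the same block without creating a forbidden collision. Choosing $A_{i,\mathbf{b},t}$ and $\phi$ so that this case analysis closes, and so that $\lfloor(q-1)/(q-z)\rfloor$ appears as exactly the maximal number of mutually compatible rounds, is the main obstacle. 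Once C1--C3 are in hand, reading off $M/N=Z/F=z/q$ and $R=S/F=(q-z)/\lfloor(q-1)/(q-z)\rfloor$ from the counts above is immediate and completes the claim.
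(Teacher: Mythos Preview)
The paper does not prove this lemma: it is quoted under ``Known results'' with a citation to \cite{CJYT}, so there is no in-paper argument to compare against. What you have written is not a proof but a plan for one, and you say so yourself when you call the choice of $A_{i,\mathbf{b},t}$ and $\phi$ ``the main obstacle''. That obstacle is exactly the content of the lemma: until those two objects are written down explicitly, nothing has been shown.

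Two concrete gaps in the sketch as it stands. First, your double-count only yields the \emph{average} number of stars per column; to get C1 (each column has exactly $Z$ stars) you need the sets $A_{i,\mathbf{b},t}$ to be balanced in a specific way across $(\mathbf{b},t)$ for each fixed $(i,a)$, and this is not automatic. Second, the heart of the construction in \cite{CJYT} is a concrete rule that ties the integer label to a linear function of $\mathbf{b}$ and the column coordinate, with the $t$-index selecting one of $\lfloor(q-1)/(q-z)\rfloor$ disjoint translates of a size-$z$ interval in $\mathbb{Z}_q$; the verification of C3 then reduces to a short algebraic check rather than a case analysis left open. Your framework is compatible with that construction, but the proposal stops precisely where the actual work begins.
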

%\begin{lemma}(\cite{CJYT})\label{le-cheng1}
%Given positive integers $q$, $z$ and $m$ with $q\geq2$ and $z<q$, there exists an $((m\lfloor\frac{q-1}{q-z}\rfloor+1)q,q^{m},zq^{m-1},(q-z)q^{m})$ PDA with $M/N=z/q$ and $R=q-z$.
%\end{lemma}
\begin{lemma}(\cite{CJYT})
\label{le-cheng-g1}
For any positive integers $q$, $z$, $m$ and $t$ with $q\geq2$, $z<q$ and $t<m$, there exists an $({m\choose t}q^t$, $\lfloor\frac{q-1}{q-z}\rfloor^t q^m$, $\lfloor\frac{q-1}{q-z}\rfloor^t(q^m-q^{m-t}(q-z)^t)$, $(q-z)^tq^{m})$ PDA with $M/N=1-(\frac{q-z}{q})^t$ and $R=(q-z)^t/\lfloor\frac{q-1}{q-z}\rfloor^t$.
\end{lemma}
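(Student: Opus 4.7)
My plan is to construct the required PDA directly by a product-type generalisation of the one-dimensional construction underlying Lemma~\ref{le-cheng2}, then verify the three axioms. Set $\lambda:=\lfloor(q-1)/(q-z)\rfloor$. For each $v\in\mathbb{Z}_q$ I fix $\lambda$ pairwise disjoint ``delivery blocks'' $D_1(v),\ldots,D_\lambda(v)\subseteq\mathbb{Z}_q\setminus\{v\}$ of size $q-z$ each, which is feasible exactly because $\lambda(q-z)\leq q-1$. I label the $F=\lambda^t q^m$ rows by $(\mathbf{a},\mathbf{b})\in[\lambda]^t\times\mathbb{Z}_q^m$ and the $K=\binom{m}{t}q^t$ columns by $(T,\mathbf{v})\in\binom{[m]}{t}\times\mathbb{Z}_q^t$, writing $T=\{i_1<\cdots<i_t\}$. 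The entry in row $(\mathbf{a},\mathbf{b})$ and column $(T,\mathbf{v})$ is declared a $*$ whenever $b_{i_j}\notin D_{a_j}(v_j)$ for some $j$; otherwise it is encoded as an integer $s=(\mathbf{c},\mathbf{b}')\in[q-z]^t\times\mathbb{Z}_q^m$, where $c_j$ is the rank of $b_{i_j}$ inside $D_{a_j}(v_j)$, $\mathbf{b}'$ agrees with $\mathbf{b}$ off $T$, and at each position $i_j$ the coordinate $\mathbf{b}'_{i_j}$ jointly records $(a_j,v_j)$ in a way that is invertible given the column.

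Conditions C1 and C2 reduce to product counts. For any fixed column the non-star rows are those with $b_{i_j}\in D_{a_j}(v_j)$ simultaneously for every $j$; these number $\lambda^t(q-z)^t q^{m-t}$, leaving exactly $Z=\lambda^t(q^m-q^{m-t}(q-z)^t)$ stars. The encoding visibly surjects onto $[q-z]^t\times\mathbb{Z}_q^m$, producing $S=(q-z)^t q^m$ distinct integers. The ratios $M/N=Z/F$ and $R=S/F$ simplify to $1-((q-z)/q)^t$ and $(q-z)^t/\lambda^t$, as claimed.

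Condition C3 is the technical core. If two distinct entries carry the same integer $s=(\mathbf{c},\mathbf{b}')$ at positions $((\mathbf{a}_\ell,\mathbf{b}_\ell),(T_\ell,\mathbf{v}_\ell))$ for $\ell=1,2$, I would first use the per-coordinate invertibility of the encoding to recover each row uniquely from $(s,T_\ell,\mathbf{v}_\ell)$: reading $\mathbf{b}_\ell|_{T_\ell^c}$ off $\mathbf{b}'$ and inverting the encoding $(c_j,\mathbf{b}'_{i_j})\mapsto (a_j,b_{i_j})$ yields the row. This establishes C3(a), since the positions then coincide whenever the columns do. For C3(b), pick any $i\in T_1\triangle T_2$, which is non-empty whenever the columns differ as sets; the value $\mathbf{b}'|_i$ that the opposite row inherits at position $i$ is forced—by disjointness of the $D_a$'s and the fact that this background value is itself dictated by data from the opposite column—to land outside every delivery block there, so that the swapped entry is automatically a $*$.

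The main obstacle is calibrating the joint encoding $\mathbf{b}'|_T\leftrightarrow(\mathbf{a},\mathbf{v})$ so that both halves of C3 hold simultaneously: a richer encoding makes injectivity (C3(a)) easy but can destroy the star-on-swap property (C3(b)), while a coarser one invites the opposite failure. The disjoint-delivery-block structure, together with the defining bound $\lambda(q-z)\leq q-1$, is precisely what reconciles these two demands; threading this needle for all $t$ coordinates at once is what makes the construction work, and it is consistent with the claim in the introduction that Lemma~\ref{le-cheng-g1} subsumes the cyclic-code construction of \cite{TR}.
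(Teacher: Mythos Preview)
The present paper does not prove this lemma at all; it is quoted from \cite{CJYT} in the ``Known results'' subsection and used only as an input to Theorem~\ref{th-fundamental recursive}. There is therefore no in-paper argument to compare your attempt against.

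Judged on its own, your product construction has the right shape, and the counts for $K$, $F$, $Z$, $S$, $M/N$ and $R$ are correct. The verification of C3, however, has two real gaps. First, your C3(b) argument selects $i\in T_1\triangle T_2$ and hence only treats columns whose support sets differ; the case $T_1=T_2$ with $\mathbf{v}_1\neq\mathbf{v}_2$ is never addressed, and this is exactly the case in which the one-dimensional mechanism behind Lemma~\ref{le-cheng2} has to be invoked coordinatewise. Second, and more seriously, the two requirements you place on the encoding $\mathbf{b}'_{i_j}$ are mutually incompatible in general. For C3(a) you need the per-$v$ map $a\mapsto \mathbf{b}'_{i_j}$ to be injective on $[\lambda]$, while your C3(b) argument needs its image to lie in $\mathbb{Z}_q\setminus\bigcup_{a'}D_{a'}(v)$, a set of size $q-\lambda(q-z)$. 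But $q-\lambda(q-z)<\lambda$ already for $q=5$, $z=3$ (there $\lambda=2$ and only one residue lies outside the two blocks), so no encoding of the type you describe can exist for all admissible $(q,z)$. The bound $\lambda(q-z)\le q-1$ that you invoke guarantees the blocks fit inside $\mathbb{Z}_q\setminus\{v\}$, but it does \emph{not} leave $\lambda$ further residues for the encoding; your ``threading the needle'' step is not merely delicate but impossible as formulated. A correct argument must store the $[\lambda]^t$-component of the row in the integer label by some other mechanism than a $\mathbb{Z}_q$-coordinate that is simultaneously required to avoid all delivery blocks.
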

%\begin{lemma}(\cite{CJYT})
%\label{le-cheng-g2}
%For any positive integers $q$, $z$, $m$ and $t$ with $q\geq2$, $z<q$ and $t<m$, there exists a $({m\choose t}q^t\lfloor\frac{q-1}{q-z}\rfloor^t$, $q^m$, $q^m-(q-z)^t q^{m-t}$, $(q-z)^tq^{m})$ PDA with $M/N=1-(\frac{q-z}{q})^t$ and $R=(q-z)^t$.
%\end{lemma}

In addition, the following result is very useful.
\begin{lemma}(\cite{CYTJ})
\label{le_permutations of PDA}
Let $\mathbf{P}$ be a $(K,F,Z,S)$ PDA for some positive integers $K$, $F$, $Z$ and $S$ with $Z<F$. Then there exists a $(K,S,S-(F-Z),F)$ PDA.
\end{lemma}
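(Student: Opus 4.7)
The plan is to construct the desired PDA by a ``transpose-style'' duality that swaps row indices with integer labels. Write $\mathbf{P}=(p_{j,k})$ with $j\in[0,F)$ and $k\in[0,K)$. First I would establish a useful uniqueness observation: in $\mathbf{P}$, each integer appears at most once per row and at most once per column, both of which follow immediately from C3(a) (two equal integers must lie in distinct rows and distinct columns). Hence for each pair $(s,k)\in[0,S)\times[0,K)$ there is at most one $j=j(s,k)$ with $p_{j,k}=s$. I then define an $S\times K$ array $\mathbf{P}'=(p'_{s,k})$ by setting $p'_{s,k}=j(s,k)$ when such a $j$ exists and $p'_{s,k}=*$ otherwise.

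For C1, column $k$ of $\mathbf{P}'$ has a non-star in row $s$ exactly when the integer $s$ appears in column $k$ of $\mathbf{P}$, so the map $s\mapsto j(s,k)$ is a bijection between the non-star entries of the two columns. Since column $k$ of $\mathbf{P}$ has $F-Z$ non-star entries, column $k$ of $\mathbf{P}'$ has $F-Z$ integers and therefore $S-(F-Z)$ stars. For C2, the integer $j$ appears in $\mathbf{P}'$ iff row $j$ of $\mathbf{P}$ contains at least one integer; discarding any all-star rows of $\mathbf{P}$ in advance (this only decreases $F$) lets us assume every $j\in[0,F)$ does appear, so all $F$ labels are used.

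The main step is C3. Suppose $p'_{s_1,k_1}=p'_{s_2,k_2}=j$ are two distinct entries of $\mathbf{P}'$; unwinding gives $p_{j,k_1}=s_1$ and $p_{j,k_2}=s_2$ in $\mathbf{P}$. If $s_1=s_2$, the uniqueness observation forces $k_1=k_2$ (an integer cannot repeat within a row of $\mathbf{P}$), collapsing the two entries; conversely $k_1=k_2$ forces $s_1=s_2$. Hence $s_1\neq s_2$ and $k_1\neq k_2$, proving C3(a) for $\mathbf{P}'$. For C3(b), suppose for contradiction that $p'_{s_1,k_2}=j'$, i.e.\ $p_{j',k_2}=s_1$. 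Then $s_1$ occurs twice in $\mathbf{P}$, at $(j,k_1)$ and $(j',k_2)$; C3(b) applied to $\mathbf{P}$ forces $p_{j,k_2}=*$, contradicting $p_{j,k_2}=s_2$. A symmetric argument shows $p'_{s_2,k_1}=*$.

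I expect the only mild subtlety to be the C2 step: one must formally argue that all-star rows of $\mathbf{P}$ can be assumed absent, otherwise the integer count of $\mathbf{P}'$ is strictly smaller than $F$. The remainder is a direct book-keeping translation between the two arrays, driven by the uniqueness observation and the forced-star structure of C3(b); no further combinatorial argument is required.
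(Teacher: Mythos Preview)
The paper does not supply its own proof of this lemma; it is quoted verbatim from \cite{CYTJ}. Your ``row/label transpose'' is exactly the construction used there (and is the standard argument for this duality), and your verification of C1 and C3 is correct and cleanly written.

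There is one genuine slip in your handling of C2. You write that discarding an all-star row of $\mathbf{P}$ ``only decreases $F$'', but in fact it decreases $Z$ by one as well (every column loses a star), so after removing $r$ such rows you are working with a $(K,F-r,Z-r,S)$ PDA. The dual you then build is a $(K,S,S-(F-Z),F-r)$ PDA, not a $(K,S,S-(F-Z),F)$ PDA, and there is no way to manufacture the missing $r$ integer labels after the fact. Indeed the lemma as literally stated can fail if all-star rows are permitted: the array
\[
\begin{pmatrix} * & * \\ * & 0 \\ 0 & * \end{pmatrix}
\]
is a $(2,3,2,1)$ PDA, but no $(2,1,0,3)$ PDA exists (a $1\times 2$ array cannot carry three distinct integers). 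The correct resolution is simply to state the harmless hypothesis that $\mathbf{P}$ has no all-star row --- this is how the result is used in the coded caching literature, and it is implicitly assumed in \cite{CYTJ}. With that assumption your C2 argument is immediate and no row-deletion step is needed at all.
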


\section{Recursive constructions}
\label{construction}
For ease of introduction, we will use the following notations.
\begin{itemize}
\item ${\bf J}_{F\times K}$ denotes a matrix with $F$ rows and $K$ columns where each entry contains $1$. A ${\bf J}_{F\times 1}$ is denoted by ${\bf J}_F$ and ${\bf J}_F$ is written as ${\bf J}$ if there is no need to list the parameter $F$.
\item Given an array $\mathbf{P}=(p_{j,k})$, $0\leq j<F$, $0\leq k<K$ with alphabet $\{0,1,\ldots,S-1\}\bigcup\{*\}$, define $\mathbf{P}+a=(p_{j,k}+a)$ and $a\mathbf{P}=(ap_{j,k})$ for any integer $a$ where $*+a=*$ and $a*=*$.
\item $\langle a\rangle_b$ denotes the least nonnegative residue of $a$ modulo $b$ for any positive integers $a$ and $b$.
\end{itemize}
Now let us introduce our recursive construction. First the following constructions are very useful. Given two positive integers $u$ and $v$ with $u>v$ and $gcd(u,v)=1$, define a $u\times (u+v)$ array
\begin{small}
\begin{eqnarray}
\label{eq-lable-matrix}
\mathbf{A}=\left(\begin{array}{cccc|cccc }
0 & 1 & \ldots & u-1 & 0         & 1           & \ldots & v-1 \\
0 & 1 & \ldots & u-1 &v        & v+1       & \ldots & 2v-1\\
0 & 1 & \ldots & u-1 &2v       & 2v+1       & \ldots & 3v-1\\
  &   & \ldots &       &           &             & \ldots &          \\
0 & 1 & \ldots & u-1 &(u-1)v &(u-1)v+1 & \ldots & uv-1
\end{array}\right)
\end{eqnarray}
\end{small}
where all the operations are performed modulo $u$. For each $j$, $0\leq j<u$, denote
$$\mathcal{A}_j=\{\langle jv\rangle_u,\langle jv+1\rangle_u,\ldots,\langle (j+1)v-1\rangle_u\}.$$
We can define another $u\times (u+v)$ array $\mathbf{B} = (b_{j,k})$, where $0 \leq j < u$, $0 \leq k < u+v$, and each entry
\begin{eqnarray}\label{eq-recursive1}
b_{j,k}=\left\{
\begin{array}{ll}
j & \textrm{if}~k\in [0, u+v) \setminus \mathcal{A}_j\\
u+\lfloor \frac{jv+\langle k-jv\rangle_u}{u}\rfloor  & \textrm{if}~k\in \mathcal{A}_j\\
\end{array}
\right.
\end{eqnarray}
\begin{construction}
\label{con-recursive1}
For any $F\times u$ array $\mathbf{P}=({\bf p}_0,{\bf p}_1,\ldots,{\bf p}_{u-1})$ with $S$ distinct integers and any two matrices $\mathbf{A}=(a_{j,k})$, $\mathbf{B}=(b_{j,k})$, $0\leq j<u$, $0\leq k<u+v$, we can define a map $\psi$ from $(\mathbf{P}$, $\mathbf{A}$, $\mathbf{B})$ to an $(Fu)\times (u+v)$ array
\begin{small}
\begin{eqnarray}
\begin{split}
\label{eq-PDA-two-Ms}
\psi(\mathbf{P},\mathbf{A}, \mathbf{B})&=\left(\begin{array}{cccc}
{\bf p}_{a_{0,0}}  &{\bf p}_{a_{0,1}}  &\ldots&{\bf p}_{a_{0,u+v-1}}\\
{\bf p}_{a_{1,0}}  &{\bf p}_{a_{1,1}}  &\ldots&{\bf p}_{a_{1,u+v-1}}\\
\vdots             & \vdots            &\ddots&\vdots\\
{\bf p}_{a_{u-1,0}}&{\bf p}_{a_{u-1,1}}&\ldots&{\bf p}_{a_{u-1,u+v-1}}\\
\end{array}\right)+S\left(\begin{array}{cccc}
b_{0,0}{\bf J}&b_{0,1}{\bf J}&\ldots&b_{0,u+v-1}{\bf J}\\
b_{1,0}{\bf J}&b_{1,1}{\bf J}&\ldots&b_{1,u+v-1}{\bf J}\\
\vdots                    & \vdots                    &\ddots&\vdots\\
b_{u-1,0}{\bf J}&b_{u-1,1}{\bf J}&\ldots&b_{u-1,u+v-1}{\bf J}\\
\end{array}\right)\\[0.2cm]
&=\left(\begin{array}{cccc}
{\bf p}_{a_{0,0}}+b_{0,0}S{\bf J}    & {\bf p}_{a_{0,1}}+b_{0,1}S{\bf J}&\ldots&{\bf p}_{a_{0,u+v-1}}+b_{0,u+v-1}S{\bf J}\\
{\bf p}_{a_{1,0}}+b_{1,0}S{\bf J}    & {\bf p}_{a_{1,1}}+b_{1,1}S{\bf J}&\ldots&{\bf p}_{a_{1,u+v-1}}+b_{1,u+v-1}S{\bf J}\\
\vdots                               & \vdots                    &\ddots&\vdots\\
{\bf p}_{a_{u-1,0}}+b_{u-1,0}S{\bf J}&{\bf p}_{a_{u-1,1}}+b_{u-1,1}S{\bf J}&\ldots&{\bf p}_{a_{u-1,u+v-1}}+b_{u-1,u+v-1}S{\bf J}\\
\end{array}\right).
\end{split}
\end{eqnarray}
\end{small}
\end{construction}
\begin{example}
\label{ex-(5,9)PDA}
When $u=3$ and $v=2$, by \eqref{eq-lable-matrix} and \eqref{eq-recursive1} the following arrays can be obtained.
\begin{eqnarray}
\label{eq-lable-matrix-he}
\mathbf{A}=\left(\begin{array}{ccccc}
0 & 1 & 2 &0 & 1\\
0 & 1 & 2 &2 & 0\\
0 & 1 & 2 &1 & 2
\end{array}\right)\ \ \ \ \ \
\mathbf{B}=\left(\begin{array}{ccccc}
3 & 3 & 0 &0 & 0\\
4 & 1 & 3 &1 & 1\\
2 & 4 & 4 &2 & 2
\end{array}\right)
\end{eqnarray}
Let $F=3$ and
\begin{eqnarray*}
\mathbf{P}=({\bf p}_0,{\bf p}_1,{\bf p}_2)=\left(\begin{array}{ccc}
* & 0 & 1\\
0 & * & 2\\
1 & 2 & *
\end{array}\right).
\end{eqnarray*}
Then using $\mathbf{A}$ and $\mathbf{B}$ in \eqref{eq-lable-matrix-he}, the following array can be obtained by \eqref{eq-PDA-two-Ms},
\begin{small}
\begin{eqnarray*}
\psi(\mathbf{P},\mathbf{A}, \mathbf{B})=\left(\begin{array}{ccccc}
{\bf p}_0+3S{\bf J} &  {\bf p}_1+3S{\bf J}   & {\bf p}_2              & {\bf p}_0              &{\bf p}_1\\
{\bf p}_0+4S{\bf J} &  {\bf p}_1+S{\bf J}    & {\bf p}_2+3S{\bf J}  & {\bf p}_2+S{\bf J}   &{\bf p}_0+S{\bf J}\\
{\bf p}_0+2S{\bf J} &  {\bf p}_1+4S{\bf J}   & {\bf p}_2+4S{\bf J}  & {\bf p}_1+2S{\bf J}  &{\bf p}_2+2S{\bf J}
\end{array}\right)
=\left(\begin{array}{c|c|c|c|c}
*  & 9  & 1  & * & 0\\
9  & *  & 2  & 0 & *\\
10 & 11 & *  & 1 & 2\\
\hline
*  & 3  & 10 & 4 & *\\
12 & *  & 11 & 5 & 3\\
13 & 5  & *  & * & 4\\
\hline
*  & 12 & 13 & 6 & 7\\
6  & *  & 14 & * & 8\\
7  & 14 & *  & 8 & *\\
\end{array}\right).
\end{eqnarray*}
\end{small}
\end{example}
It is easy to check that $\mathbf{P}$ is a $(3,3,1,3)$ PDA and $\psi(\mathbf{P},\mathbf{A},\mathbf{B})$ is a $(5,9,3,15)$ PDA in Example \ref{ex-(5,9)PDA}.
\begin{lemma}
\label{le-gdc(K12)=1}
Given a $(K_1,F,Z,S)$ PDA, there exists a $(K_1+K_2,K_{1}F,K_{1}Z, (K_1+K_2)S)$ PDA for any positive integer $K_2$ with $gcd(K_1,K_2)=1$.
\end{lemma}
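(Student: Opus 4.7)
The plan is to apply Construction~\ref{con-recursive1} directly with $u:=K_1$ and $v:=K_2$ to the given $(K_1,F,Z,S)$ PDA $\mathbf{P}$, producing $\mathbf{P}':=\psi(\mathbf{P},\mathbf{A},\mathbf{B})$. By construction $\mathbf{P}'$ has $uF=K_1F$ rows and $u+v=K_1+K_2$ columns, so it only remains to verify the three PDA conditions with $Z'=uZ=K_1Z$ and $S'=(u+v)S=(K_1+K_2)S$. The star count (C1) is immediate, since each column of $\mathbf{P}'$ is the vertical stacking of $u$ shifted copies of columns of $\mathbf{P}$, the scalar shift $b_{j,k}S$ leaves stars unchanged, and each copy contributes $Z$ stars.

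The combinatorial core of both C2 and C3 is the following observation: for every fixed $b\in[0,u+v)$ the set $\{(j,k):b_{j,k}=b\}$ has exactly $u$ positions, and at those positions the values $a_{j,k}$ run through $\{0,1,\ldots,u-1\}$ bijectively. For $b\in[0,u)$ this is read off from row $b$ of $\mathbf{A}$ once one notes that the two blocks ($k<u$ and $k\geq u$) together cover $[0,u)\setminus\mathcal{A}_b$ and $\mathcal{A}_b$ respectively. For $b=u+b'$ with $b'\in[0,v)$, the bijection $(j,l)\leftrightarrow n=jv+l$ between $[0,u)\times[0,v)$ and $[0,uv)$ sends the $u$ integers in $[b'u,(b'+1)u)$ to $u$ positions $(j,k)$ with $k=\langle n\rangle_u=a_{j,k}$ taking all $u$ residues modulo $u$. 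Combined with condition C2 of $\mathbf{P}$, every integer $s+bS$ with $s\in[0,S)$, $b\in[0,u+v)$ appears in $\mathbf{P}'$, giving C2.

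The main obstacle is C3. Suppose $\mathbf{P}'_{r_1,k_1}=\mathbf{P}'_{r_2,k_2}=s+bS$ at two distinct cells and write $r_i=j_iF+f_i$; reduction modulo $S$ forces $b_{j_i,k_i}=b$ and $p_{f_i,a_{j_i,k_i}}=s$ in $\mathbf{P}$. I split on $b<u$ (which forces $j_1=j_2=b$) and $b\geq u$ (which forces $k_i\in\mathcal{A}_{j_i}\subseteq[0,u)$ and $a_{j_i,k_i}=k_i$). Two delicate subcases must be ruled out. First, the configuration $r_1=r_2$ with $k_1\neq k_2$: this would give two occurrences of $s$ in the same row of $\mathbf{P}$ at distinct columns $a_{j,k_1},a_{j,k_2}$ (distinct by the injectivity in the combinatorial observation), contradicting C3 of $\mathbf{P}$. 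Second, the configuration $k_1=k_2$ with $j_1\neq j_2$ in the case $b\geq u$: here $k\in\mathcal{A}_{j_1}\cap\mathcal{A}_{j_2}$ yields $(j_1-j_2)v\equiv l_2-l_1\pmod{u}$ with $|l_2-l_1|<v<u$, and this is where $\gcd(u,v)=1$ enters essentially, forcing $l_1=l_2$ and then $j_1=j_2$, a contradiction.

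In all remaining configurations one has $f_1\neq f_2$, $k_1\neq k_2$, and $a_{j_1,k_1}\neq a_{j_2,k_2}$, so C3 of $\mathbf{P}$ provides $p_{f_1,a_{j_2,k_2}}=p_{f_2,a_{j_1,k_1}}=\ast$. Moreover, the identity $a_{j_1,k_2}=a_{j_2,k_2}$ (and $a_{j_2,k_1}=a_{j_1,k_1}$) holds in both cases: when $b<u$ because $j_1=j_2$, and when $b\geq u$ because $k_1,k_2\in[0,u)$ implies $a_{j,k}=k$ for every $j$. Adding the scalar shift $b_{\cdot,\cdot}S$ preserves stars, so $\mathbf{P}'_{r_1,k_2}=\mathbf{P}'_{r_2,k_1}=\ast$, which verifies C3 and completes the proof.
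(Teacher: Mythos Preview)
Your overall strategy coincides with the paper's: apply Construction~\ref{con-recursive1} with $u=K_1$, $v=K_2$ and verify the PDA axioms by splitting on the common value $b$ of $b_{j,k}$. The combinatorial observation that the $a$-values on each level set $\{(j,k):b_{j,k}=b\}$ form a bijection onto $[0,u)$ is correct and makes C2 explicit (the paper simply says ``easy to check'').

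However, the argument for your second delicate subcase ($k_1=k_2=k$, $j_1\ne j_2$, $b\geq u$) has a genuine error. From $(j_1-j_2)v\equiv l_2-l_1\pmod u$ and $|l_2-l_1|<v$ one \emph{cannot} conclude $l_1=l_2$, even with $\gcd(u,v)=1$: take $u=3$, $v=2$, $j_1=0$, $j_2=1$, $l_1=0$, $l_2=1$; then $k=0\in\mathcal{A}_0\cap\mathcal{A}_1$ and the congruence $-2\equiv 1\pmod 3$ holds, yet $l_1\ne l_2$. What you have not used is the hypothesis $b_{j_1,k}=b_{j_2,k}$. Setting $n_i=j_iv+l_i$, the equality $k_1=k_2$ gives $\langle n_1\rangle_u=\langle n_2\rangle_u$, while $b_{j_1,k}=b_{j_2,k}$ gives $\lfloor n_1/u\rfloor=\lfloor n_2/u\rfloor$; together these force $n_1=n_2$, hence $(j_1,l_1)=(j_2,l_2)$, the desired contradiction. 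This is exactly the paper's argument, and note that it does \emph{not} use $\gcd(u,v)=1$; your claim that coprimality ``enters essentially'' at this point is incorrect.

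A second, smaller omission: the configuration $k_1=k_2$, $j_1=j_2$, $f_1\ne f_2$ lies in neither of your two ruled-out subcases, so the sentence ``in all remaining configurations one has $k_1\ne k_2$'' is not yet justified. It is dispatched immediately---same column $a_{j_1,k_1}=a_{j_2,k_2}$ of $\mathbf{P}$, same integer $s$, distinct rows $f_1,f_2$, contradicting C3 of $\mathbf{P}$---but it should be mentioned.
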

\begin{proof} Assume that $\mathbf{P}$ is  a $(K_1,F,Z,S)$ PDA. Let $u=K_1$ and $v=K_2$. Then matrices $\mathbf{A}$ and $\mathbf{B}$ can be obtained by \eqref{eq-lable-matrix} and \eqref{eq-recursive1} respectively. We claim that $\psi(\mathbf{P},\mathbf{A}, \mathbf{B})$ defined in \eqref{eq-PDA-two-Ms} is a $(u+v,uF,uZ, (u+v)S)$ PDA, i.e., $(K_1+K_2,K_{1}F,K_{1}Z, (K_1+K_2)S)$ PDA. It is easy to check that C$1$ and C$2$ hold by \eqref{eq-PDA-two-Ms}.
So we only need to consider C$3$. We consider two distinct sub-columns, say ${\bf p}_{a_{j,k}}+b_{j,k}S{\bf J}_F$ and ${\bf p}_{a_{j',k'}}+b_{j',k'}S{\bf J}_F$ with $j,j'\in [0, u)$ and $k,k'\in [0, u+v)$.
Clearly if $b_{j,k}\neq b_{j',k'}$, there is no integer both in ${\bf p}_{a_{j,k}}+b_{j,k}S{\bf J}_F$ and ${\bf p}_{a_{j',k'}}+b_{j',k'}S{\bf J}_F$.
So we only need to consider the case $b_{j,k}=b_{j',k'}$ with $(j, k) \neq (j', k')$.
\begin{itemize}
\item When $b_{j,k} = b_{j',k'}  \in [0,u)$, we have that $b_{j,k}=j$ and  $k, k' \in [0, u+v) \setminus \mathcal{A}_j$ from \eqref{eq-recursive1}. By \eqref{eq-PDA-two-Ms} we have
    $$\psi(\mathbf{P},\mathbf{A}, \mathbf{B})|_{b_{j,k}=j}=({\bf p}_{a_{j,k}})_{k\in [0, u+v) \setminus \mathcal{A}_j}+b_{j,k}S{\bf J}_{F\times u}=({\bf p}_{a_{j,k}})_{k\in [0, u+v) \setminus \mathcal{A}_j}+jS{\bf J}_{F\times u}.$$
    Clearly $({\bf p}_{a_{j,k}})_{k\in [0, u+v) \setminus \mathcal{A}_j}$ can be obtained by permutating the columns of $\mathbf{P}$.
    Hence it is also a PDA. So $\psi(\mathbf{P},\mathbf{A}, \mathbf{B})|_{b_{j,k}=j}$ satisfies C$3$, which implies
    ${\bf p}_{a_{j,k}}+b_{j,k}S{\bf J}$ and ${\bf p}_{a_{j,k}}+b_{j',k'}S{\bf J}$ satisfy C$3$ too.
\item When $b_{j,k} = b_{j',k'} \in [u,u+v)$, then $j,j'\in [0,u)$ and $k,k'\in [0,u)$ from \eqref{eq-recursive1}.
\begin{itemize}
\item If $j=j'$, $k\neq k'$ must hold by our hypothesis. Then we have the following sub-array
$$({\bf p}_{a_{j,k}}+b_{j,k}S{\bf J},{\bf p}_{a_{j,k'}}+b_{j,k'}S{\bf J})=({\bf p}_{a_{j,k}},{\bf p}_{a_{j,k'}})+b_{j,k}S({\bf J},{\bf J}).$$
Since $({\bf p}_{a_{j,k}},{\bf p}_{a_{j,k'}})$ is a sub-array of $\mathbf{P}$, $({\bf p}_{a_{j,k}},{\bf p}_{a_{j,k'}})$ satisfies C$3$. This implies that ${\bf p}_{a_{j,k}}+b_{j,k}S{\bf J}$ and ${\bf p}_{a_{j,k}}+b_{j,k'}S{\bf J}$ satisfy C$3$ too.
\item If $k=k'$, $j\neq j'$ must hold by our hypothesis.
%We claim $b_{j,k}\neq b_{j',k}$ always holds in this subcase. Assume that $b_{j,k}= b_{j',k}$.
    Since $b_{j,k}=b_{j',k}\in [u,u+v)$, we have $k\in \mathcal{A}_j \bigcap \mathcal{A}_{j'}$ from \eqref{eq-recursive1}, which implies $0 \leq k < u$. Without loss of generality, let $j'>j$,
$x_k=\langle k-jv\rangle_{u}$ and $x'_{k}=\langle k-{j'}v\rangle_{u}$.
Then $k+uz=jv+x_k$ and $k+uz'=j'v+x'_k$ for some nonnegative integers $z$ and $z'$. If $z'=z$, we have
  $\mathcal{A}_j\bigcap \mathcal{A}_{j'}=\emptyset$,
%i.e.,
%$$\{\langle jv\rangle_{u},\langle jv+1\rangle_{u},\ldots, \langle(j+1)v-1\rangle_{u}\}\bigcap\{\langle j'v\rangle_{u},\langle j'v+1\rangle_{u},\ldots, \langle(j'+1)v-1\rangle_{u}\}=\emptyset,$$
a contradiction to our hypothesis. So $z'>z$ always holds. Then $$b_{j,k}=u+\left\lfloor \frac{jv+x_k}{u}\right\rfloor=\left\lfloor z+\frac{k}{u}\right\rfloor=z \ \ \ \hbox{and}\ \ \ \
b_{j',k}=u+\left\lfloor \frac{j'v+x'_k}{u}\right\rfloor=\left\lfloor z'+\frac{k}{u}\right\rfloor=z'.
$$
This implies $b_{j,k}\neq b_{j',k}$, a contradiction to our assumption $b_{j,k}= b_{j',k}$. So this subcase does not happen.
\item If $j\neq j'$ and $k\neq k'$, the following sub-array can be obtained.
\begin{eqnarray*}
\left(\begin{array}{cc}
{\bf p}_{a_{j,k}}+b_{j,k}S{\bf J} &{\bf p}_{a_{j,k'}}+b_{j,k'}S{\bf J}\\
{\bf p}_{a_{j',k}}+b_{j',k}S{\bf J}&{\bf p}_{a_{j',k'}}+b_{j',k'}S{\bf J}
\end{array}\right)
\end{eqnarray*}
By \eqref{eq-lable-matrix} and \eqref{eq-PDA-two-Ms}, we have ${a_{j,k}}= {a_{j',k}}$ and ${a_{j,k'}}=a_{j',k'}$ since $k,k'\in [0,u)$. So together with our hypothesis of $b_{j,k}=b_{j',k'}$, the above sub-array can be written as follows.
\begin{eqnarray}\label{eq-1006-10-29}
\left(\begin{array}{cc}
{\bf p}_{a_{j,k}}+b_{j,k}S{\bf J} &{\bf p}_{a_{j',k'}}+b_{j,k'}S{\bf J}\\
{\bf p}_{a_{j,k}}+b_{j',k}S{\bf J}&{\bf p}_{a_{j',k'}}+b_{j,k}S{\bf J}
\end{array}\right)&=&\left(\begin{array}{cc}
{\bf p}_{a_{j,k}}&{\bf p}_{a_{j',k'}}\\
{\bf p}_{a_{j,k}}&{\bf p}_{a_{j',k'}}
\end{array}\right)+\left(\begin{array}{cc}
b_{j,k}S{\bf J} &b_{j,k'}S{\bf J}\\
b_{j',k}S{\bf J}&b_{j,k}S{\bf J}
\end{array}\right)
\end{eqnarray}
%From the investigation of the above sub-case, ${\bf p}_{a_{j,k}}+b_{j,k}S{\bf J}$ has distinct integers in ${\bf p}_{a_{j',k'}}+b_{j,k'}S{\bf J}$ and ${\bf p}_{a_{j,k}}+b_{j',k}S{\bf J}$.
So the sub-array in \eqref{eq-1006-10-29} satisfies C$3$ if and only if the sub-array $({\bf p}_{a_{j,k}}, {\bf p}_{a_{j,k'}})$ satisfies C$3$. Clearly $({\bf p}_{a_{j,k}}, {\bf p}_{a_{j,k'}})$ satisfies C$3$ since it is a sub-array of $\mathbf{P}$.
\end{itemize}

\end{itemize}
\end{proof}

For  any two positive integers $K_1$ and $K_2$, let  $d = gcd(K_1,K_2)$,  $K'_1=K_1/d$ and $K'_2=K_2/d$.
Based on the $K'_1\times (K'_1+K'_2)$ matrices, $\mathbf{A}$ and $\mathbf{B}$ defined in \eqref{eq-lable-matrix} and \eqref{eq-recursive1} respectively, we can define another two $K'_1\times (K_1+K_2)$ matrices $\mathbf{A}'=({\bf a}'_{j,k})$ and $\mathbf{B}'=({\bf b}'_{j,k})$
where
$${\bf a}'_{j,k}=(da_{j,k},da_{j,k}+1, \ldots, da_{j,k}+d-1)\ \ \ \ \hbox{and}\ \ \ \ \ {\bf b}'_{j,k}=(b_{j,k},b_{j,k}, \ldots, b_{j,k})$$
are row vectors with length $d$, $0\leq j<K'_1$, $0\leq k<K'_1+K'_2$.
\begin{example}
\label{ex-DE-d=2}
%We use the same parameters in Example \ref{ex-(5,9)PDA}.
When $K_1 = 6, K_2 = 4$, we have $d = gcd(K_1,K_2) = 2$, $K'_1=3$ and $K'_2=2$.
Then we can obtain two matrices $\mathbf{A}$  and $\mathbf{B}$ by applying \eqref{eq-lable-matrix} and \eqref{eq-recursive1}
with $u=K'_1=3$ and  $v=K'_2=2$ respectively, which were listed in \eqref{eq-lable-matrix-he}.
%When $d=2$, we have $K'_1=2\cdot 3=6$ and $K'_2=2\cdot 2=4$. Based on $\mathbf{A}$ and $\mathbf{B}$ in \eqref{eq-lable-matrix-he},
Based on these two matrices, the following two $3\times (6+4)$ matrices can be obtained.
\begin{eqnarray}
\label{eq-lable-matrix-ex-d=2}
\mathbf{A}'=\left(\begin{array}{cccccccccc}
0 & 1 &2 &3 &4 &5 &0 &1 &2 &3\\
0 & 1 &2 &3 &4 &5 &4 &5 &0 &1\\
0 & 1 &2 &3 &4 &5 &2 &3 &4 &5
\end{array}\right)\ \ \ \ \
\mathbf{B}'=\left(\begin{array}{cccccccccc}
3 &3 &3 &3 &0 &0 &0 &0 &0 &0\\
4 &4 &1 &1 &3 &3 &1 &1 &1 &1\\
2 &2 &4 &4 &4 &4 &2 &2 &2 &2
\end{array}\right)
\end{eqnarray}
\end{example}
Given a $(6,4,2,4)$ PDA
\begin{eqnarray*}
\mathbf{P}=\left(\begin{array}{cccccc}
*&*&*& 0& 1& 2\\
*& 0& 1&*&*& 3\\
0&*& 2&*& 3&*\\
1& 2&*& 3&*&*
\end{array}\right),
\end{eqnarray*}
and $\mathbf{A}'$, $\mathbf{B}'$ in \eqref{eq-lable-matrix-ex-d=2}, we have
\begin{small}
\begin{eqnarray*}
\psi(\mathbf{P},\mathbf{A}', \mathbf{B}')
=\left(\begin{array}{cccccc|cccc}
* &* &* &12&1 &2 &* &* &* &0\\
* &12&13&* &* &3 &* &0 &1 &*\\
12&* &14&* &3 &* &0 &* &2 &*\\
13&14&* &15&* &* &1 &2 &* &3\\
\hline
* &* &* &4 &13&14&5 &6 &* &*\\
* &16&5 &* &* &15&* &7 &* &4\\
16&* &6 &* &15&* &7 &* &4 &*\\
17&18&* &7 &* &* &* &* &5 &6\\
\hline
* &* &* &16&17&18&* &8 &9 &10\\
* &8 &17&* &* &19&9 &* &* &11\\
8 &* &18&* &19&* &10&* &11&*\\
9 &10&* &19&* &* &* &11&* &*
\end{array}\right).
\end{eqnarray*}
\end{small}
It is easy to check that the above array is a $(10,12,6,20)$ PDA. Similar to the proof of Lemma \ref{le-gdc(K12)=1}, we can also obtain a new PDA $\psi(\mathbf{P},\mathbf{A}', \mathbf{B}')$ defined in \eqref{eq-PDA-two-Ms} based on a given PDA $\mathbf{P}$ and the above matrices $\mathbf{A}'$, $\mathbf{B}'$.
\begin{theorem}
\label{th-fundamental recursive}
For any given $(K_1,F,Z,S)$ PDA and any positive integer $0<K_2\leq K_1$, there exists a $(K_{1}+K_{2},h_1F,h_1Z,(h_1+h_2)S)$ PDA where $h_1=K_{1}/gcd(K_{1},K_{2})$ and $h_2=K_{2}/gcd(K_{1},K_{2})$.
\end{theorem}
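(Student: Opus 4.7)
The plan is to build the claimed PDA as $\psi(\mathbf{P},\mathbf{A}',\mathbf{B}')$ in the sense of Construction \ref{con-recursive1}, using the extended $h_1\times(K_1+K_2)$ matrices $\mathbf{A}',\mathbf{B}'$ introduced just above Example \ref{ex-DE-d=2}. These matrices are built from the ``coprime'' matrices $\mathbf{A},\mathbf{B}$ of shape $h_1\times(h_1+h_2)$ attached to $u=h_1$, $v=h_2$ (legitimate since $\gcd(h_1,h_2)=1$). The verification will mirror the proof of Lemma \ref{le-gdc(K12)=1} step by step; the only new ingredient is to bookkeep the $d$-fold duplication built into the primed matrices.

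First I would record two identities valid for all $j\in[0,h_1)$ and $k\in[0,K_1+K_2)$:
\[
a'_{j,k}=d\,a_{j,\lfloor k/d\rfloor}+(k\bmod d),\qquad b'_{j,k}=b_{j,\lfloor k/d\rfloor}.
\]
Together with $a_{j,m}=m$ for $m\in[0,h_1)$, the first identity yields the key \emph{column-preservation} property $a'_{j,k}=k$ whenever $k\in[0,K_1)$, independent of $j$. From these, C1 and C2 are immediate: each column of $\psi(\mathbf{P},\mathbf{A}',\mathbf{B}')$ is a vertical stack of $h_1$ columns of $\mathbf{P}$, each contributing $Z$ stars (total $h_1Z$), and every offset $\beta\in[0,h_1+h_2)$ appears in $\mathbf{B}'$ (since it already appears in $\mathbf{B}$), so all $(h_1+h_2)S$ integers occur.

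For C3, I would consider two entries carrying the same value; their sub-columns must share an offset, say $b'_{j,k}=b'_{j',k'}$. When this common offset lies in $[0,h_1)$ it must equal $j=j'$, and the sub-array of $\psi$ on the columns with $b'_{j,k}=j$ is a column-permutation of $\mathbf{P}$ (the identity part of $\mathbf{A}'$ together with the extension part cover every column of $\mathbf{P}$ exactly once), hence inherits C3. When the common offset lies in $[h_1,h_1+h_2)$, both $k,k'$ must lie in $[0,K_1)$, so the column-preservation identity gives $a'_{j,k}=a'_{j',k}=k$ and $a'_{j,k'}=a'_{j',k'}=k'$; the three subcases $(j=j',k\neq k')$, $(j\neq j',k=k')$, $(j\neq j',k\neq k')$ are then handled verbatim as in Lemma \ref{le-gdc(K12)=1}. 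The first reduces to a two-column sub-PDA of $\mathbf{P}$, and the third reduces to a $2\times 2$ sub-PDA of $\mathbf{P}$ at columns $(k,k')$ once the shared offsets are cancelled.

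The hard subcase is $(j\neq j',k=k')$, where I must contradict $b'_{j,k}=b'_{j',k}$. I expect to reproduce the lemma's floor-modulus argument essentially verbatim after the substitution $m=\lfloor k/d\rfloor$: from $m\in\mathcal{A}_j\cap\mathcal{A}_{j'}$ one extracts nonnegative integers $z<z'$ satisfying $m+h_1z=jh_2+\langle m-jh_2\rangle_{h_1}$ and $m+h_1z'=j'h_2+\langle m-j'h_2\rangle_{h_1}$, so that $b_{j,m}\neq b_{j',m}$ and hence $b'_{j,k}\neq b'_{j',k}$, contradicting the hypothesis. Everything else is routine bookkeeping that lifts the lemma's analysis through the two identities above.
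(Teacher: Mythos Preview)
Your proposal is correct and follows essentially the same approach as the paper: the paper states (without writing out details) that the general case is handled by applying Construction~\ref{con-recursive1} with the primed matrices $\mathbf{A}',\mathbf{B}'$ and arguing ``similar to the proof of Lemma~\ref{le-gdc(K12)=1}'', which is precisely what you have fleshed out via the two bookkeeping identities $a'_{j,k}=d\,a_{j,\lfloor k/d\rfloor}+(k\bmod d)$ and $b'_{j,k}=b_{j,\lfloor k/d\rfloor}$. Your case split for C3 and the handling of the critical subcase $j\neq j'$, $k=k'$ via the substitution $m=\lfloor k/d\rfloor$ exactly mirror the lemma's argument, as intended.
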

In fact, Theorem \ref{th-fundamental recursive} also holds when $K_1<K_2$ if we make small changes to the expression in \eqref{eq-recursive1}. However here we omit the similar introduction to save space. In the following section, let us consider the applications of Theorem  \ref{th-fundamental recursive} based on some previously known results.
\section{Applications}
\label{sec-application}
For simplicity, we only consider the case $K_2\leq K_1$. Let $K_1=dh_1$ and $K_2=dh_2$ where $d=gcd(K_1,K_2)$ and $h_1>h_2$ in the following.
\subsection{Application on MN scheme}
\label{subsec-app-MN}
From Lemma \ref{le-MN}, the following results can be obtained by Theorem \ref{th-fundamental recursive}.
\begin{theorem}
\label{th-MN-recursive}
For any positive integers $K_1$, $K_2$ and $t$ with $t,K_2<K_1$, there exists a $(K_1+K_2,h_1{K_1\choose t}, h_1{K_1-1\choose t-1}, (h_1+h_2){K_1\choose t+1})$ PDA with $M/N=t/K_1$ and $R=(1+\frac{h_2}{h_1})\frac{K_1-t}{1+t}$.
\end{theorem}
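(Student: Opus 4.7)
The plan is to derive Theorem \ref{th-MN-recursive} as a direct application of the recursive construction in Theorem \ref{th-fundamental recursive} to the MN PDA supplied by Lemma \ref{le-MN}. Since both theorems are already available, no new combinatorial work on PDA axioms is required; the task reduces to an instantiation and a parameter bookkeeping check.

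First I would take the MN PDA from Lemma \ref{le-MN} with parameters $K=K_1$ and the given $t$ (recall $t<K_1$ is assumed in the statement), which produces a $(K_1,F,Z,S)$ PDA with
\[
F=\binom{K_1}{t},\qquad Z=\binom{K_1-1}{t-1},\qquad S=\binom{K_1}{t+1}.
\]
Since the hypothesis $0<K_2<K_1$ of the theorem falls inside the range $0<K_2\le K_1$ allowed by Theorem \ref{th-fundamental recursive}, I can feed this MN PDA into that recursive construction. The output is a PDA with exactly the target dimensions:
\[
(K_1+K_2,\ h_1F,\ h_1Z,\ (h_1+h_2)S)=\left(K_1+K_2,\ h_1\binom{K_1}{t},\ h_1\binom{K_1-1}{t-1},\ (h_1+h_2)\binom{K_1}{t+1}\right),
\]
where $h_1=K_1/\gcd(K_1,K_2)$ and $h_2=K_2/\gcd(K_1,K_2)$ as required.

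Next I would verify the claimed memory ratio and rate. Using $\binom{K_1-1}{t-1}/\binom{K_1}{t}=t/K_1$ gives
\[
\frac{M}{N}=\frac{h_1Z}{h_1F}=\frac{Z}{F}=\frac{t}{K_1},
\]
and using $\binom{K_1}{t+1}/\binom{K_1}{t}=(K_1-t)/(t+1)$ yields
\[
R=\frac{(h_1+h_2)S}{h_1F}=\left(1+\frac{h_2}{h_1}\right)\cdot\frac{K_1-t}{1+t},
\]
matching the statement exactly.

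The proof is essentially a one-line invocation plus these two ratio identities, so there is no real obstacle; the only thing worth checking is that the constraints of Theorem \ref{th-fundamental recursive} are met ($0<K_2\le K_1$) and that the MN hypothesis $t<K_1$ is preserved, both of which are explicit in the statement. No edge-case argument is needed for $K_2=K_1$ since the theorem assumes $K_2<K_1$.
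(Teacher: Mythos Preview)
Your proposal is correct and follows exactly the paper's approach: the paper simply states that Theorem \ref{th-MN-recursive} is obtained from Lemma \ref{le-MN} via Theorem \ref{th-fundamental recursive}, and you have spelled out precisely that instantiation together with the easy binomial-ratio checks for $M/N$ and $R$. Nothing further is needed.
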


From Theorem \ref{th-MN-recursive}, we have a PDA with user number $K_1+K_2$, $M/N=t/K_1$ and
\begin{eqnarray}
\label{eq-com-MN-re_1}
F=h_1{K_1\choose t}, \ \ \ \ \ \ R=\left(1+\frac{h_2}{h_1}\right)\frac{K_1-t}{1+t}.
\end{eqnarray}

We claim that this PDA has smaller packet number but little larger rate than that of MN PDA.  For simplicity we assume that $t|K_2$. From Lemma \ref{le-MN}, we have an MN PDA with
\begin{eqnarray}
\label{eq-com-MN-re_2}
F_{MN}={K_{1}+K_{2}\choose (K_{1}+K_{2})t/K_{1}}, \ \ \ \ \ \ R_{MN}=\frac{(K_{1}+K_{2})\frac{K_{1}-t}{K_{1}}}{\frac{t(K_{1}+K_{2})}{K_{1}}+1}
=\frac{(K_{1}+K_{2})(K_{1}-t)}{t(K_{1}+K_{2})+K_{1}}.
\end{eqnarray}
So we have
\begin{eqnarray}
\label{eq-com-MN-re_3}
\begin{split}
&\frac{F}{F_{MN}}=\frac{h_1{K_{1}\choose t}}{{K_{1}+K_{2}\choose (K_{1}+K_{2})t/K_{1}}}\approx h_1 2^{-K_{2}H(\frac{t}{K_{1}})},\\
&\frac{R}{R_{MN}}=\left(1+\frac{h_2}{h_1}\right)\frac{K_{1}-t}{1+t}
\frac{t(K_{1}+K_{2})+K_{1}} {(K_{1}+K_{2})(K_{1}-t)}
=\left(1+\frac{h_2}{h_1}\right)\left(1-\frac{h_2}{(1+t)(h_1+h_2)}\right).
\end{split}
\end{eqnarray}
By the way the last approximate equality of the first formula holds for large enough $K_{1}$ where $H(\lambda)$ is the entropy function defined as follows.
$$H(\lambda)=-\lambda \log_2(\lambda)-(1-\lambda)\log_2(1-\lambda)\ \ \ \ 0<\lambda<1$$
For fixed $K_{1}$ and $M/N$, we have that $F$ is about $h^{-1}_1 2^{K_{2}H(\frac{t}{K_{1}})}$ times smaller than that of MN scheme, but $R$ just increases
 $$\left(1+\frac{h_2}{h_1}\right)\left(1-\frac{h_2}{(1+t)(h_1+h_2)}\right)<1+\frac{h_2}{h_1}<2$$ times. Clearly $h^{-1}_1 2^{K_{2}H(\frac{t}{K_{1}})}$ is increases exponentially with $K_{2}$, i.e., when $K_{1}$ and $K_{2}$ approach infinite, $h^{-1}_1  2^{K_{2}H(\frac{t}{K_{1}})}$ approach infinite very fast.
\begin{example}
When $M/N=0.5$ and $K_1/K_2=2$, the corresponding rates and packets numbers in the following table can be obtained by \eqref{eq-com-MN-re_1} and \eqref{eq-com-MN-re_2}.
\begin{eqnarray*}
\begin{array}{|c|c|c|c|c|c|c|}
\hline
K  &K_1 &K_2 &R      &R_{MN}&F          &  F_{MN}\\ \hline
6  &  4 &  2 &1.0000 &0.7500& 12        &20\\ \hline
12 &  8 &  4 &1.2000 &0.8571&       140 &       924\\ \hline
18 & 12 &  6 &1.2857 &0.9000&      1848 &     48620\\ \hline
24 & 16 &  8 &1.3333 &0.9231&     25740 &   2704156\\ \hline
30 & 20 & 10 &1.3636 &0.9375&    369512 & 155117520\\ \hline
36 & 24 & 12 &1.3846 &0.9474&   5408312 &9075135300\\ \hline
42 & 28 & 14 &1.4000 &0.9545&  80233200 &538257874440\\ \hline
48 & 32 & 16 &1.4118 &0.9600&1202160780 &32247603683100\\ \hline
\end{array}
\end{eqnarray*}
\end{example}

It is interesting that for the same $K$, $M/N$ and $R$, packet number of a PDA obtained by Theorem \ref{th-MN-recursive} is sometimes smaller than that of the scheme generated by memory sharing method based on MN PDAs. Let us introduce memory sharing method first. Given the parameters $K$ and $N$, assume that there exist $r$ schemes with $(M_1,R_1,F_1)$, $\ldots$, $(M_r,R_r,F_r)$ where $M_1<M_2<\ldots<M_r$. \cite{MN} showed that
a scheme with
%a scheme for any $M_1\leq M\leq M_d$ such that
\begin{eqnarray}
\label{eq-M-MNRF}
\begin{split}
\frac{M}{N}&=\lambda_1 \frac{M_1}{N}+\lambda_2 \frac{M_2}{N}+\ldots+\lambda_r \frac{M_r}{N}\\
R_{M-MN}&=\lambda_1 R_1+\lambda_2 R_2+\ldots+\lambda_r R_r\\
F_{M-MN}&=F_1+F_2+\ldots+F_r
\end{split}
\end{eqnarray}
can be obtained where $M_1\leq M\leq M_r$, $0< \lambda_i\leq 1$ and $\sum_{i=1}^{r}\lambda_i=1$.

Unfortunately it may be hard to theoretically compare with the advantages of them since the parameters are confusable. Instead we illustrate the reduction in packet number by numerical comparisons. Given a positive integer $K$, there exist $K+1$ MN PDAs with $M=0,N/K,\ldots,N(K-1)/K,N$ respectively
%, denoted by
%\begin{eqnarray}
%\label{eq-MNPDA}
%\mathbf{P}_0,\ \ \mathbf{P}_1,\ \ \ldots,\ \ \mathbf{P}_{K-1},\ \ \mathbf{P}_K
%\end{eqnarray}
from Lemma \ref{le-MN}. Based on these MN PDAs, we have many pairs $(R_{M-MN},F_{M-MN})$ obtained by \eqref{eq-M-MNRF} for a fixed $M/N$. Denote the set of all these pairs by $\mathcal{H}$. For the fixed $M/N$, we can choose appropriate positive integers $K_1$ and $K_2$ such that $K = K_1 + K_2$ and obtain a PDA with $F$ and $R$ from Theorem \ref{th-MN-recursive}. With the aid of a computer we can find out all the pairs from $\mathcal{H}$, say $(R_{M-MN}^{(1)},F_{M-MN}^{(1)})$, $(R_{M-MN}^{(2)},F_{M-MN}^{(2)})$, $\ldots$, $(R_{M-MN}^{(s)},,F_{M-MN}^{(s)})$, such that
$$|R_{M-MN}^{(1)}-R|=\ldots=|R_{M-MN}^{(s)}-R|=\min\{|R_{M-MN}-R|\ |\ (R_{M-MN},F_{M-MN})\in \mathcal{H}\}$$
With out loss of generality we assume that $F^{(1)}_{M-MN} = \min\{F_{M-MN}^{(i)} \ | \ 1 \leq i \leq s\}$.
From Table \ref{tab_com-MN-MNrecursive-R}, the packet number obtained by Theorem \ref{th-MN-recursive} is smaller even though $R\leq R^{(1)}_{M-MN}$ for the same $K$ and $M/N$.
\begin{small}
\begin{table}[H]
  \centering
\caption{Schemes generated by Theorem \ref{th-MN-recursive} and memory sharing method based on MN PDAs} \label{tab_com-MN-MNrecursive-R}
  \normalsize{
  \begin{tabular}{|c|c|c|c|c|c|c|c|c|c|c|}
\hline
$K$&$K_1$&$K_2$&$h_1$&$h_2$&$t$&$M/N$&$R$&$R^{(1)}_{M-MN}$&$F$&$F^{(1)}_{M-MN}$\\ \hline
18 & 12 & 6 & 2 & 1 & 9 & 3/4 & 0.450 & 0.484 & 440 & 816\\ \hline
20 & 11 & 9 & 11 & 9 & 7 & 7/11 & 0.909 & 0.928 & 3630 & 5035\\ \hline
25 & 14 & 11 & 14 & 11 & 11 & 11/14 & 0.446 & 0.454 & 5096 & 12675\\ \hline
28 & 15 & 13 & 15 & 13 & 11 & 11/15 & 0.622 & 0.655 & 20475 & 101556\\ \hline
30 & 18 & 12 & 3 & 2 & 11 & 11/18 & 0.972 & 0.993 & 95472 & 593776\\ \hline
34 & 18 & 16 & 9 & 8 & 10 & 5/9 & 1.373 & 1.402 & 393822 & 5657872\\ \hline
36 & 19 & 17 & 19 & 17 & 13 & 13/19 & 0.812 & 0.851 & 515508 & 8724672\\ \hline
40 & 21 & 19 & 21 & 19 & 14 & 2/3 & 0.889 & 0.922 & 2441880 & 80743065\\ \hline
44 & 24 & 20 & 6 & 5 & 13 & 13/24 & 1.440 & 1.460 & 14976864 & 710016516\\ \hline
48 & 25 & 23 & 25 & 23 & 16 & 16/25 & 1.016 & 1.032 & 51074375 & 6918064890\\ \hline
50 & 26 & 24 & 13 & 12 & 18 & 9/13 & 0.810 & 0.815 & 20309575 & 10809156820\\ \hline
\end{tabular}}
\end{table}
\end{small}

Finally let us consider the performance comparing with the PDAs in Lemma \ref{le-cheng2}. The authors in \cite{CJYT} showed that PDAs in Lemma \ref{le-cheng2} have lower packet number level than that of MN PDAs. By the way when $M/N=1/q$ and $1-1/q$, this fact had also been shown in \cite{TR,YCTC} respectively. Given a positive integer $K$, we have many types of PDAs by choosing different $q$ and $z$ for a fixed $M/N$. Based on these PDAs, we have many pairs $(R_{M-le2},F_{M-le2})$ obtained by \eqref{eq-M-MNRF} for a fixed $M/N$. Similarly we can also obtain a PDA with $R$ and $F$ from Theorem \ref{th-MN-recursive}. And we also assume that $R^{(1)}_{M-le2}$ is nearest $R$ and $F^{(1)}_{M-le2}$ is the related minimum packet number. For some fixed $K$, $M/N$, three types of schemes are listed in Table \ref{tab_com-le2-MN-MNrecursive-R}. From Table \ref{tab_com-le2-MN-MNrecursive-R}, we have $R<R^{(1)}_{M-le2}$,  $R<R^{(1)}_{M-MN}$ and $F^{(1)}_{M-le2}<F<F^{(1)}_{M-MN}$. This implies that comparing with memory sharing method based on MN PDAs and PDAs in Lemma \ref{le-cheng2} respectively, Theorem \ref{th-MN-recursive} has smaller packet number but larger rate, and Theorem \ref{th-MN-recursive} has larger packet number but smaller rate.
\begin{small}
\begin{table}[H]
  \centering
\caption{PDAs in Theorem \ref{th-MN-recursive} and generated by memory sharing based on MN PDAs and PDAs in Lemma \ref{le-cheng2}} \label{tab_com-le2-MN-MNrecursive-R}
  \normalsize{
  \begin{tabular}{|c|c|c|c|c|c|c|c|c|c|c|c|c|}
\hline
$K$&$K_1$&$K_2$&$t$&$q$&$m$&$M/N$&$R^{(1)}_{M-le2}$&$R$&$R^{(1)}_{M-MN}$&$F^{(1)}_{M-le2}$&$F$&$F^{(1)}_{M-MN}$\\ \hline
18&12&6 &9 &6 &2 &3/4  &0.600&0.450&0.466&252    &440     &43776\\ \hline
20&11&9 &6 &5 &3 &6/11 &1.401&1.299&1.351&375    &5082    &16644\\ \hline
28&15&13&8 &4 &6 &8/15 &1.489&1.452&1.461&16384  &96525   &1560780\\ \hline
35&20&15&11&5 &6 &11/20&1.375&1.313&1.327&46875  &671840  &183631756\\ \hline
38&20&18&11&19&1 &11/20&4.125&1.425&1.441&95     &1679600 &472807571\\ \hline
44&24&20&13&4 &10&13/24&1.444&1.440&1.455&4194304&14976864&686354883984\\ \hline
\end{tabular}}
\end{table}
\end{small}
Similarly we can also discuss the performance of Theorem \ref{th-MN-recursive} comparing with other PDAs in Lemmas \ref{le-cheng-g1}.
\subsection{Applications on other known PDAs}
Based on the PDAs in Lemmas \ref{le-cheng2} and \ref{le-cheng-g1}, we can obtain the following results by Theorem \ref{th-fundamental recursive}.
\begin{theorem}\label{th-cheng2-re}
Given positive integers $q$, $z$, $m$ with $q\geq2$ and $z<q$, there exists an $((m+1)q+K_2$, $h_1\lfloor\frac{q-1}{q-z}\rfloor q^{m}$, $h_1z\lfloor\frac{q-1}{q-z}\rfloor q^{m-1}$, $(h_1+h_2)(q-z)q^{m})$ PDA with $M/N=z/q$ and $R=(1+\frac{h_2}{h_1})\frac{q-z}{\lfloor\frac{q-1}{q-z}\rfloor}$.
\end{theorem}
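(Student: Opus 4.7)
The plan is to obtain this result as an immediate corollary of Theorem \ref{th-fundamental recursive} by plugging in the PDA of Lemma \ref{le-cheng2} as the ``seed'' PDA and letting the second user block have size $K_2$. So the proof is essentially a parameter-matching exercise, with no new combinatorial content beyond what is already established in Theorem \ref{th-fundamental recursive}.

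First I would instantiate Lemma \ref{le-cheng2} with the given $q$, $z$, $m$ to obtain a $(K_1,F,Z,S)$ PDA where $K_1=(m+1)q$, $F=\lfloor\frac{q-1}{q-z}\rfloor q^m$, $Z=z\lfloor\frac{q-1}{q-z}\rfloor q^{m-1}$ and $S=(q-z)q^m$. Then I would pick any positive integer $K_2\leq K_1$ and set $d=\gcd(K_1,K_2)$, $h_1=K_1/d$, $h_2=K_2/d$, so that the hypotheses of Theorem \ref{th-fundamental recursive} are satisfied. Applying Theorem \ref{th-fundamental recursive} to this seed yields a $(K_1+K_2,h_1F,h_1Z,(h_1+h_2)S)$ PDA, and substituting the expressions above gives exactly the $((m+1)q+K_2,\,h_1\lfloor\frac{q-1}{q-z}\rfloor q^m,\,h_1 z\lfloor\frac{q-1}{q-z}\rfloor q^{m-1},\,(h_1+h_2)(q-z)q^m)$ PDA claimed in the statement.

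Finally I would check the two ratios: the caching ratio is
\begin{equation*}
\frac{M}{N}=\frac{h_1 Z}{h_1 F}=\frac{Z}{F}=\frac{z}{q},
\end{equation*}
which is unchanged from the seed, and the transmission rate is
\begin{equation*}
R=\frac{(h_1+h_2)S}{h_1 F}=\Bigl(1+\frac{h_2}{h_1}\Bigr)\frac{(q-z)q^m}{\lfloor\frac{q-1}{q-z}\rfloor q^m}=\Bigl(1+\frac{h_2}{h_1}\Bigr)\frac{q-z}{\lfloor\frac{q-1}{q-z}\rfloor},
\end{equation*}
matching the claimed formula. There is no real obstacle here: the only subtlety is recognizing that Theorem \ref{th-fundamental recursive} as stated requires $K_2\leq K_1$, which is explicitly the regime fixed at the start of Section \ref{sec-application}; the symmetric case $K_2>K_1$ is already noted to follow from the same construction with a minor adjustment, so no extra work is needed.
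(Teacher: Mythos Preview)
Your proposal is correct and is exactly the approach the paper takes: Theorem \ref{th-cheng2-re} is stated as an immediate application of Theorem \ref{th-fundamental recursive} to the seed PDA of Lemma \ref{le-cheng2}, with no additional argument beyond the parameter substitution you carried out. The paper gives no formal proof at all, so your explicit verification of the ratios $M/N$ and $R$ in fact goes slightly beyond what is written there.
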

%\begin{theorem}\label{th-cheng1-re}
%Given positive integers $q$, $z$, $m$ with $q\geq2$ and $z<q$, there exists an $((m\lfloor\frac{q-1}{q-z}\rfloor+1)q+K_2,h_1q^{m},
%h_1 zq^{m-1},(h_1+h_2)(q-z)q^{m})$ PDA with $M/N=z/q$ and $R=(1+\frac{h_2}{h_1})(q-z)$.
%\end{theorem}

\begin{theorem}
\label{th-cheng-g1-re}
For any positive integers $q$, $z$, $m$ and $t$ with $q\geq2$, $z<q$ and $t<m$, there exists an $({m\choose t}q^t + K_2$, $h_1\lfloor\frac{q-1}{q-z}\rfloor^t q^m$, $h_1\lfloor\frac{q-1}{q-z}\rfloor^t(q^m-q^{m-t}(q-z)^t)$, $(h_1+h_2)(q-z)^tq^{m})$ PDA with $M/N=1-(\frac{q-z}{q})^t$ and $R=(1+\frac{h_2}{h_1})(q-z)^t/\lfloor\frac{q-1}{q-z}\rfloor^t$.
\end{theorem}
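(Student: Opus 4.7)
The plan is to obtain this result as a direct specialization of the recursive construction in Theorem \ref{th-fundamental recursive}, applied to the base PDAs supplied by Lemma \ref{le-cheng-g1}. All the heavy lifting—verifying conditions C1, C2, C3 of the larger array $\psi(\mathbf{P},\mathbf{A}',\mathbf{B}')$—has already been carried out in the proof of Lemma \ref{le-gdc(K12)=1} (and is asserted in the stated generality by Theorem \ref{th-fundamental recursive}); here one only has to substitute the right quantities and simplify.

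Concretely, I would proceed as follows. First, take as input the PDA guaranteed by Lemma \ref{le-cheng-g1} with parameters
\[
K_1={m\choose t}q^t,\quad F=\lfloor\tfrac{q-1}{q-z}\rfloor^t q^m,\quad Z=\lfloor\tfrac{q-1}{q-z}\rfloor^t\bigl(q^m-q^{m-t}(q-z)^t\bigr),\quad S=(q-z)^t q^m,
\]
which requires the stated hypotheses $q\geq 2$, $z<q$, $t<m$. Next, fix an arbitrary positive integer $K_2$ with $0<K_2\leq K_1$, set $d=\gcd(K_1,K_2)$, $h_1=K_1/d$, $h_2=K_2/d$, and invoke Theorem \ref{th-fundamental recursive} with these choices. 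This immediately yields a $(K_1+K_2,\,h_1F,\,h_1Z,\,(h_1+h_2)S)$ PDA, i.e., a PDA whose four parameters are exactly those listed in the statement.

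It then only remains to read off $M/N$ and $R$. Since the recursive construction multiplies both $F$ and $Z$ by the same factor $h_1$, the memory ratio is preserved:
\[
\frac{M}{N}=\frac{h_1 Z}{h_1 F}=\frac{Z}{F}=1-\left(\frac{q-z}{q}\right)^t,
\]
while the rate scales by $(h_1+h_2)/h_1=1+h_2/h_1$,
\[
R=\frac{(h_1+h_2)S}{h_1 F}=\left(1+\frac{h_2}{h_1}\right)\frac{(q-z)^t}{\lfloor\frac{q-1}{q-z}\rfloor^t}.
\]

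There is essentially no obstacle: the only delicate point is ensuring the hypothesis $K_2\leq K_1$ of Theorem \ref{th-fundamental recursive} is satisfied, and the result is stated for arbitrary such $K_2$. (The remark immediately after Theorem \ref{th-fundamental recursive} notes that the case $K_2>K_1$ can also be handled with minor modifications to \eqref{eq-recursive1}, so the statement could in fact be extended, but I would not pursue that here.) Thus the proof reduces to one line: apply Theorem \ref{th-fundamental recursive} to the PDA of Lemma \ref{le-cheng-g1} and compute.
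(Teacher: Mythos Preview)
Your proposal is correct and matches the paper's approach exactly: the paper states this theorem as an immediate consequence of applying Theorem~\ref{th-fundamental recursive} to the base PDA from Lemma~\ref{le-cheng-g1}, with no additional argument given. Your parameter substitution and the verification of $M/N$ and $R$ are precisely what is intended, and your remark on the restriction $K_2\leq K_1$ is also consistent with the convention adopted in Section~\ref{sec-application}.
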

%\begin{theorem}
%\label{th-cheng-g2-re}
%For any positive integers $q$, $z$, $m$ and $t$ with $q\geq2$, $z<q$ and $t<m$, there exists a $({m\choose t}q^t\lfloor\frac{q-1}{q-z}\rfloor^t+K_2$, $h_1 q^m$, $h_1(q^m-(q-z)^t q^{m-t})$, $(h_1+h_2)(q-z)^tq^{m})$ PDA with $M/N=1-(\frac{q-z}{q})^t$ and $R=(1+\frac{h_2}{h_1})(q-z)^t$.
%\end{theorem}

Similarly let us conder PDAs in Theorem \ref{th-cheng2-re} and in Lemma \ref{le-cheng2} for some parameters $K$. Given any positive integers $q$, $m$, let $K_1=(m+1)q$ and $K_2=qm'<K_1$ for simplicity, where $m'$ is a positive integer. When $M/N=z/q$, from Theorem \ref{th-cheng2-re}, we have a PDA with user number $K_1+K_2$ and
\begin{eqnarray}
\label{eq-com-cheng-re_1}
F=h_1 \lfloor\frac{q-1}{q-z}\rfloor q^m\ \ \ \ \ \ R=(1+\frac{h_2}{h_1})\frac{q-z}{\lfloor\frac{q-1}{q-z}\rfloor}
\end{eqnarray}
where
\begin{eqnarray}
\label{eq-com-cheng-re_1/q}
h_1=\frac{K_1}{gcd(K_1,K_2)}=\frac{m+1}{gcd(m+1,m')},\ \  \ \ h_2=\frac{K_2}{gcd(K_1,K_2)}=\frac{m'}{gcd(m+1,m')}.
\end{eqnarray}
Now let us consider the decrement of packet number. When $K=K_1+K_2=q(m+m'+1)$ and $M/N=z/q$, from Lemma \ref{le-cheng2}, we have a PDA with
\begin{eqnarray*}
F_{le2}=\lfloor\frac{q-1}{q-z}\rfloor q^{m+m'}\ \ \ \ \ \ R_{le2}=\frac{q-z}{\lfloor\frac{q-1}{q-z}\rfloor}\label{eq-com-cheng-re_2}
\end{eqnarray*}
So we have
\begin{eqnarray}
\label{eq-com-cheng-re_3}
\begin{split}
\frac{F}{F_{le2}}=h_1 q^{-m'}\ \ \ \ \ \frac{R}{R_{le2}}=1+\frac{h_2}{h_1}=1+\frac{m'}{m+1}=\frac{m+1+m'}{m+1}<2.
\end{split}
\end{eqnarray}
For  fixed $K$ and $M/N$, we have that $F$ is exactly $h^{-1}_1 q^{m'}$ times smaller than that of a PDA in Lemma \ref{le-cheng2}, but $R$ just increases $1+\frac{h_2}{h_1}<2$ times.

In addition, the results in Claims 3, 4 of \cite{TR}, which have been summarized in Theorem 1 of \cite{TR1}, are just our special cases. First let us introduce the results in Theorem 1 of \cite{TR1}. For any positive integers $n$, let $K=nq$ and $x$ be the least positive integer such that $(m+1)|nx$. References \cite{TR,TR1} proposed the following coded caching schemes with $K=nq$ when {\em $q$ is a prime power and satisfies some conditions related the $[n,m]$ cyclic code} (The interested readers could be referred to \cite{TR1} for more details).
\begin{itemize}
\item When $M/N=1/q$, there exists a coded caching scheme with
\begin{eqnarray}
\label{eq-Li-1}
R_{1}=\frac{n}{m+1}(q-1)\ \ \hbox{and} \ \ \ F_{1}=q^m x;
\end{eqnarray}

\item When $M/N=1-\frac{m+1}{nq}$, there exists a coded caching scheme with
\begin{eqnarray}
\label{eq-Li-2}
R_{2}=\frac{m+1}{(q-1)n}\ \ \hbox{and} \ \ \ F_{2}=(q-1)q^m \frac{xn}{m+1}.
\end{eqnarray}
\end{itemize}
Let us consider the case $M/N=1/q$ first. From Theorem \ref{th-cheng2-re} we have an $((m+1)q+K_2$, $h_1q^{m}$, $h_1q^{m-1}$, $(h_1+h_2)(q-1)q^{m})$ PDA with $M/N=1/q$ and $R=(1+\frac{h_2}{h_1})(q-1)$. Let $K_1=(m+1)q$ and $K=K_1+K_2$. Clearly $q|K_2$. Then we have
$$n=\frac{K_1+K_2}{q}=\frac{(h_1+h_2)gcd(K_1,K_2)}{q}.$$
By \eqref{eq-com-cheng-re_1/q}, $h_1=x$ holds and the following equations can be easily verified.
\begin{eqnarray*}
&&R=(1+\frac{h_2}{h_1})(q-1)=\frac{(h_1+h_2)\cdot gcd(K_1,K_2)}{h_{1}\cdot gcd(K_1,K_2)}(q-1)=\frac{nq}{K_1}=\frac{n}{m+1}(q-1)=R_{1}\\[0.2cm]
&&F=h_1q^{m}=xq^{m}=F_{1}
\end{eqnarray*}

Now let us consider the case $M/N=1-\frac{k+1}{nq}$.
With the same parameter $K_1$ and $K_2$ in the case $M/N=1/q$, from Theorem \ref{th-cheng2-re} and Lemma \ref{le_permutations of PDA},
we have an $((m+1)q+K_2, (h_1+h_2)(q-1)q^{m},(h_1+h_2)(q-1)q^{m}- (h_1q^{m}-h_1q^{m-1}),h_1q^{m})$ PDA with
\begin{eqnarray}
\label{eq-ch1/qMN}
\frac{M}{N}=\frac{(h_1+h_2)(q-1)q^{m}- (h_1q^{m}-h_1q^{m-1})}{(h_1+h_2)(q-1)q^{m}}
=1-\frac{h_1}{(h_1+h_2)q}=1-\frac{m+1}{nq},
\end{eqnarray}
\begin{eqnarray}
\label{eq-ch1/nqR}
R=\frac{h_1q^{m}}{(h_1+h_2)(q-1)q^{m}}=\frac{h_1}{(h_1+h_2)(q-1)}=\frac{m+1}{(q-1)n}=R_2
\end{eqnarray}
and
\begin{eqnarray}
\label{eq-ch1/nqF}
F=(h_1+h_2)(q-1)q^{m}=\frac{nx}{m+1}(q-1)q^{m}=F_2
\end{eqnarray}
where the second item in \eqref{eq-ch1/nqF} can be derived by the fact
$$(h_1+h_2)=(h_1+h_2)\cdot\frac{gcd(K_1,K_2)}{gcd(K_1,K_2)}\cdot\frac{h_1}{h_1}=\frac{nq h_1}{(m+1)q}=\frac{nx}{m+1}.$$
%Clearly $M/N$ and $R$ in \eqref{eq-ch1/nqR} and \eqref{eq-ch1/nqF} are exactly the value in \eqref{eq-Li-2}.

It is worth to note that with the same parameters $K$, $M/N$, $R$ and $F$ in \cite{TR}, our result holds for {\em any positive integer} $q\geq 2$. We can discuss the PDAs in Theorem  \ref{th-cheng-g1-re} similarly.

\section{Conclusion}
\label{conclusion}
In this paper, a novel recursive construction was proposed. As an application, several new schemes were obtained. Comparing with previously known schemes, our new schemes could further reduce packet number by increasing little rate. And the packet number of our new schemes are smaller than that of schemes generated by memory sharing method for some fixed $K$, $M/N$ and $R$. In addition, our new schemes include all the results in \cite{TR} as special cases.

\end{document}